\newcommand{\myparagraph}[1]{\vspace{5pt}\noindent\textbf{#1}}
\newcommand{\myequation}[1]{\begin{equation}
\setlength{\abovedisplayskip}{0.3\abovedisplayskip}
\setlength{\belowdisplayskip}{0.3\belowdisplayskip}
#1
\end{equation}}
\newcommand{\is}{::=}
\newcommand{\Request}{{Request}\xspace}
\newcommand{\PS}{{PolicySet}\xspace}
\newcommand{\Policy}{{Policy}\xspace}
\newcommand{\Rule}{{Rule}\xspace}
\newcommand{\Target}{{Target}\xspace}
\newcommand{\Condition}{{Condition}\xspace}
\newcommand{\Match}{{Match}\xspace}
\newcommand{\AnyOf}{{AnyOf}\xspace}
\newcommand{\AllOf}{{AllOf}\xspace}
\newcommand{\Attribute}{{Attribute}\xspace}
\newcommand{\cat}{\ensuremath{\mathit{cat}}}
\newcommand{\XACML}{\hbox{\tiny{XACML}}}
\newcommand{\PXACML}{\ensuremath{\Prog_{\!\XACML}}\xspace}
\newcommand{\Pgenerator}{\ensuremath{\Prog_{\!\mathit{generator}}}\xspace}
\newcommand{\mcPS}{\ensuremath{\mc{P\!S}}}
\newcommand{\mcPSid}[1][\!\ID]{\ensuremath{\mcPS_{#1}}}
\newcommand{\mcP}{\ensuremath{\mc{P}}}
\newcommand{\mcPid}[1][\!\ID]{\ensuremath{\mcP_{#1}}}
\newcommand{\mcR}{\ensuremath{\mc{R}}}
\newcommand{\mcRid}[1][\!\ID]{\ensuremath{\mcR_{#1}}}
\newcommand{\mcC}{\ensuremath{\mc{C}}\xspace}
\newcommand{\mcT}{\ensuremath{\mc{T}}\xspace}
\newcommand{\mcE}{\ensuremath{\mc{E}}\xspace}
\newcommand{\mcA}{\ensuremath{\mc{A}}\xspace}
\newcommand{\mcM}{\ensuremath{\mc{M}}\xspace}
\newcommand{\mcQ}{\ensuremath{\mc{Q}}\xspace}
\newcommand{\mcAt}{\ensuremath{\mc{A}ttr}}
\newcommand{\eval}{\ensuremath{\mathsf{eval}}\xspace}
\newcommand{\seq}[1]{\ensuremath{\langle #1 \rangle}}
\newcommand{\la}{\leftarrow}
\newcommand{\ra}{\rightarrow}
\newcommand{\ground}[1]{\ensuremath{\mathit{ground}(#1)}}
\newcommand{\Body}{Body}
\newcommand{\mc}[1]{\ensuremath{\mathcal{#1}}}
\newcommand{\mb}[1]{\ensuremath{\mathbf{#1}}}
\newcommand{\prog}[1]{\[ {\footnotesize \begin{array}{ll} #1 \end{array} } \] }
\newcommand{\Prog}{\ensuremath{\Pi}\xspace}
\newcommand{\val}{\ensuremath{\mathsf{val}}}
\newcommand{\m}{\ensuremath{\mathsf{m}}}
\newcommand{\nm}{\ensuremath{\mathsf{nm}}}
\newcommand{\idt}{\ensuremath{\mathsf{idt}}}
\newcommand{\error}{\ensuremath{\mathsf{error}}}
\renewcommand{\t}{\ensuremath{\mathsf{t}}}
\newcommand{\f}{\ensuremath{\mathsf{f}}}
\newcommand{\p}{\ensuremath{\mathsf{p}}\xspace}
\renewcommand{\d}{\ensuremath{\mathsf{d}}\xspace}
\newcommand{\na}{\ensuremath{\mathsf{na}}\xspace}
\newcommand{\dec}{\ensuremath{\mathsf{decision\_of}}}
\newcommand{\comb}{\ensuremath{\mathsf{CombID}}}
\newcommand{\id}{\ensuremath{\mathsf{i_d}}\xspace}
\newcommand{\ip}{\ensuremath{\mathsf{i_p}}\xspace}
\newcommand{\idp}{\ensuremath{\mathsf{i_{dp}}}\xspace}
\newcommand{\algo}{\ensuremath{\mathsf{algo}}}
\newcommand{\po}{\ensuremath{\mathsf{po}}}
\newcommand{\denyo}{\ensuremath{\mathsf{do}}}
\newcommand{\ooa}{\ensuremath{\mathsf{ooa}}}
\newcommand{\fa}{\ensuremath{\mathsf{fa}}}
\newcommand{\mynot}{\ensuremath{\mathbf{not}~}}
\newcommand{\cond}{\ensuremath{\mathit{cond}}}
\newcommand{\gap}{\ensuremath{\mathit{gap}}}
\newcommand{\semantics}[1]{\ensuremath{\llbracket #1\rrbracket}\xspace}
\newcommand{\semanticsM}[1]{\ensuremath{\semantics{#1}\xspace}}
\newcommand{\semanticsC}[1]{\ensuremath{\semantics{#1}\xspace}}
\newcommand{\semanticsR}[1]{\ensuremath{\semantics{#1}\xspace}}
\newcommand{\semanticsP}[1]{\ensuremath{\semantics{#1}\xspace}}
\newcommand{\mynote}[1]{{#1}}
\DeclareMathAlphabet\mathbfcal{OMS}{cmsy}{b}{n}
\begin{document}


\title{XACML 3.0 in Answer Set Programming -- \\
Extended Version
}
%
\titlerunning{XACML 3.0 in Answer Set Programming}

\author{Carroline Dewi Puspa Kencana Ramli, Hanne Riis Nielson, Flemming Nielson}
\authorrunning{Carroline Dewi Puspa Kencana Ramli, Hanne Riis Nielson, Flemming Nielson}

\institute{Department of Informatics and Mathematical Modelling \\
Danmarks Tekniske Universitet \\
Lyngby, Denmark\\
\email{\mailsa}}

\maketitle
\vspace{-15pt}
\begin{abstract}
We present a systematic technique for transforming XACML 3.0 policies in Answer Set Programming (ASP). We show that the resulting logic program has a unique answer set that directly corresponds to our formalisation of the standard semantics of XACML 3.0 from \cite{Ramli2011}. We demonstrate how our results make it possible to use off-the-shelf ASP solvers to formally verify properties of access control policies represented in XACML, such as checking the completeness of a set of access control policies and verifying policy properties. 

\keywords{XACML, access control, policy language, Answer Set Programming}
\end{abstract}
\vspace{-30pt}
\section{Background}
XACML (eXtensible Access Control Markup Language) is a prominent access control language that is widely adopted both in industry and academia. XACML  is an international standard in the field of information security and in \mynote{February 2005,  XACML version 3.0 was ratified by OASIS.}\footnote{The Organization for the Advancement of Structured Information Standards (OASIS) is a global consortium that drives the development, convergence, and adoption of e-business and web service standards.} XACML  represents a shift from a more static security approach as exemplified by ACLs (Access Control Lists) towards a dynamic approach, based on Attribute Based Access Control (ABAC) systems. These dynamic security concepts are more difficult to understand, audit and interpret in  real-world implications. The use of XACML requires not only the right tools but also well-founded concepts for policy creation and management.

The problem with XACML is that its specification is described in natural language (c.f.\ \cite{XACML3.0}) and manual analysis of the overall effect and consequences of a large XACML policy set is a very daunting and time-consuming task. How can a policy developer be certain that the represented policies capture all possible requests? Can they lead to conflicting decisions for some request? Do the policies satisfy all required properties? These complex problems cannot be solved easily without some automatised support. 

To address this problem we propose a logic-based XACML analysis framework using Answer Set Programming (ASP). With ASP we model an XACML Policy Decision Point (PDP) that loads XACML policies and evaluates XACML requests against these policies. The expressivity of ASP and the existence of efficient implementations of the answer set semantics, such as \texttt{clasp}\footnote{\url{http://www.cs.uni-potsdam.de/clasp/}} and  \texttt{DLV}\footnote{\url{http://www.dlvsystem.com/}}, provide the means for declarative specification and verification of properties of XACML policies.

Our work is depicted in Figure~\ref{f:our work}. \mynote{There are two main modules, viz. the PDP simulation module and the access control (AC) security property verification module.} In the first module, we transform an XACML query  and XACML policies from the original format in XML syntax into abstract syntax which is more compact than the original. Subsequently we generate a query program $\Prog_{\mc{Q}}$ and XACML policies program $\PXACML$ that correspond to the XACML query and the XACML policies, respectively. \mynote{We show that the corresponding answer set \mynote{(AS)} of $\Prog_{\mcQ} \cup \PXACML$ is unique and it coincides with the semantics of original XACML policy evaluation.} In the second module, we demonstrate how our results make it possible to use off-the-shelf ASP solvers to formally verify properties of AC policies represented in XACML.  \mynote{First we encode the AC security property and a generator for each possible domain of XACML policies into logic programs $\Prog_{\!\mathit{AC\_property}}$ and $\Pgenerator$, respectively. The encoding of AC property is in the negated formula in order to show at a later stage that each answer set corresponds to a counter example that  violates the AC property. Together with the combination of $\PXACML \cup \Prog_{\!\mathit{AC\_property}} \cup \Pgenerator$ we show that the XACML policies satisfy the AC property when there is no available answer set.}
\vspace{-20pt}
\begin{figure}
\centering 
\label{f:our work}
\caption{Translation Process from Original XACML to XACML-ASP}
\scalebox{0.63}{
\begin{tikzpicture}
\tikzstyle{textnode}=[minimum height=1.3cm, text width=4cm, text centered, draw, rectangle, rounded corners]
\tikzstyle{query}=[draw=red!70!black, ->, >=angle 60, thick]
\tikzstyle{policies}=[draw=blue!50!black, ->, >=angle 60, thick]
\tikzstyle{answer}=[draw=green!50!black, ->, >=angle 60, thick]
\tikzstyle{property}=[draw=blue!50!red, ->, >=angle 60, thick]

\node[textnode, query] (qorig) at (0, 0) { \textbf{XACML Query} \\ in original format};
\node[textnode, query] (qabstr) at (0, -2.5) { \textbf{XACML Query} \\ in abstract syntax};
\node[textnode, query] (qlp) at (0, -5) { \textbf{XACML Query} \\ in a logic program};
\draw[query] (qorig) -- (qabstr);
\draw[query] (qabstr) -- (qlp);

\node[textnode, policies] (porig) at (6, 0) { \textbf{XACML Policies} \\ in original format};
\node[textnode, policies] (pabstr) at (6, -2.5) { \textbf{XACML Policies} \\ in abstract syntax};
\node[textnode, policies] (plp) at (6, -5) { \textbf{XACML Policies} \\ in  logic programs};
\draw[policies] (porig) -- (pabstr);
\draw[policies] (pabstr) -- (plp);

\node[textnode, answer] (response) at (12, -5) { \textbf{XACML Response} \\ Answer Set};
\draw[query] (qlp) -- (plp);
\draw[policies] (plp) -- (response);

\node[textnode, property] (acprop) at (0, -7.5) {\textbf{Access Control Properties} \\ in logic programs};
\node[textnode, property] (domgen) at (6, -7.5) {\textbf{Domain Generator} \\ in logic programs};
\node[textnode, answer] (result) at ( 12, -7.5) {\textbf{Result} \\ Answer Set(s)};
\draw[policies] (plp) -- (domgen);
\draw[property] (acprop) -- (domgen);
\draw[property] (domgen) -- (result);

\draw[dashed, very thick, black] (-2, -1.25) -- (14.0, -1.25);
\draw[dashed, very thick, black] (-2, -3.75) -- (14.0, -3.75);
\draw[dashed, very thick, black] (-2, -6.25) -- (14.0, -6.25);
\end{tikzpicture} 
}
\end{figure}

\noindent \textit{Outline.} \mynote{We consider the current version, XACML 3.0, 
Committee Specification 01, 10 August 2010. }
in Section~\ref{s:xacml} we explain the abstract syntax and semantics of XACML 3.0. 
Then we describe the transformation of XACML 3.0 components into logic programs in Section~\ref{s:transformation}. We show the relation between XACML 3.0 semantics and the answer sets in Section~\ref{s:xacml-asp}. Next, in Section~\ref{s:analysis}, we show how to verify AC properties, such as checking the completeness of a set of policies. In Section~\ref{s:related work} we discuss the related work. We end the paper with conclusions and future work. 

\section{XACML 3.0}
\label{s:xacml}
In order to avoid superfluous syntax of XACML 3.0, first we present the abstract syntax of XACML 3.0 which only shows the important components of XACML 3.0. We continue the explanation by presenting the semantics of XACML 3.0 components' evaluation based on Committee Specification \cite{XACML3.0}.  We take the work of Ramli \textit{et. al} work \cite{Ramli2011} as our reference. 

\subsection{Abstract Syntax of XACML 3.0}
Table \ref{t:syntax} shows the abstract syntax of XACML 3.0. We use bold font for non-terminal \textbf{symbols}, typewriter font for terminal \texttt{symbols} and \textit{identifiers} and \textit{values} are written in italic font. 
A symbol followed by the star symbol ($^*$) indicates that there are zero or more occurrences of that symbol. Similarly, a symbol followed by the plus symbol ($^+$) indicates that there are one or more occurrences of that symbol. We consider that each policy has a unique identifier (ID).  We use initial capital letter for XACML components such as \PS, \Policy, \Rule, etc., and  small letters  for English terminology. 

\vspace{-10pt}
\begin{table*}[ht!]
\caption{Abstraction of XACML 3.0 Components}
\label{t:syntax}
\centering
\begin{tabular}{|l|lcl|}
\hline  \vspace{-7pt} & & & \\ 
&\multicolumn{3}{|c|}{\textbf{\underline{XACML Policy Components}}} \\
\vspace{-7pt}& & &\\
\PS           & $\bm{\mcPS}$ & \is & $\mcPSid = {[} \bm{\mathcal{T}}, \seq{(\mcPSid \mid \mcPid)^*}, \textbf{CombID} {]}$\\
\Policy      & $\bm{\mcP}$  & \is &  $\mcPid = {[} \bm{\mathcal{T}}, \seq{\mcRid~^{\!\!+}}, \textbf{CombID} {]}$\\
\Rule         & $\bm{\mc{R}}$  & \is &  $\mcRid = {[} \textbf{Effect}, \bm{\mc{T}}, \bm{\mc{C} } {]}$ \\
\Condition  & $\bm{\mc{C}}$  & \is & \texttt{true} \textbar\  $f^{\mathsf{bool}}(a_1, \dotsc, a_n)$\\
\Target      & $\bm{\mc{T}}$   & \is & $\texttt{null} \mid \bigwedge \bm{\mc{E}}^+$ \\
\AnyOf      & $\bm{\mc{E}}$  & \is &  $\bigvee \bm{\mc{A}}^+$ \\
\AllOf        & $\bm{\mc{A}}$ & \is &  $\bigwedge \bm{\mc{M}}^+$ \\
\Match      & $\bm{\mc{M}}$ & \is &  $\bm{\mcAt}$\\
& \textbf{CombID}  & \is & \texttt{po} \textbar\ \texttt{do} \textbar\ \texttt{fa} \textbar\ \texttt{ooa} \\
& \textbf{Effect} & \is & \texttt{p} \textbar\ \texttt{d} \\ 
\Attribute& $\bm{\mcAt}$ & \is & $\mathit{category(attribute\_value)}$ \\
\hline 
\vspace{-7pt} & & & \\ 
& \multicolumn{3}{|c|}{\textbf{\underline{XACML Request Component}}}\\ 
\vspace{-7pt}& & &\\
\Request & $\bm{\mc{Q}}$ & \is & $( \bm{\mcAt} \mid \error(\bm{\mcAt}) )^+ $\\
\hline
   \end{tabular}
\end{table*}

There are three levels of policies in XACML, namely \PS, \Policy and \Rule. \PS or \Policy can act as the root of a set of access control policies, while \Rule is a single entity that describes one particular access control policy. Throughout this paper we consider that \PS is the root of the set of access control policies.

Both \PS and \Policy function as containers for a sequence of  \PS, \Policy or \Rule. A \PS  contains either a sequence of  \PS  elements or a sequence of \Policy elements, while a \Policy can only contain a sequence of \Rule elements. Every sequence of \PS, \Policy or \Rule elements has an associated \emph{combining algorithm}. There are four common combining algorithms defined in XACML 3.0, namely \emph{permit-overrides} ($\po$), \emph{deny-overrides} ($\denyo$), \emph{first-applicable} ($\fa$) and \emph{only-one-applicable} ($\ooa$). 

A  \Rule describes an individual access control policy. It regulates whether an access should be \textit{permitted} (\p) or \textit{denied} (\d). All \PS, \Policy and \Rule are applicable whenever their \Target matches with the \Request. When the \Rule's \Target matches the \Request, then the applicability of the \Rule is refined by its \Condition. 

A \Target element identifies the set of decision requests that the parent element is intended to evaluate. The \Target element must appear as a child of a \PS and \Policy element and may appear as a child of a \Rule element. The empty \Target for \Rule element is indicated by \texttt{null} attribute. The \Target element contains a conjunctive sequence of \AnyOf elements. The \AnyOf element contains a disjunctive sequence of \AllOf elements, while the \AllOf element contains a conjunctive sequence of \Match elements. \mynote{Each \Match element specifies an attribute  that a \Request should match.}

A \Condition is a Boolean function over attributes or functions of attributes. In this abstraction, the user is free to define the \Condition as long as its expression returns  a Boolean value, i.e., either true or false. Empty \Condition is always associated to true. 

A \Request contains a set of attribute values for a particular access request and the error messages that occurred during the evaluation of attribute values. 

\subsection{XACML 3.0 Formal Semantics}
\label{ss:formal semantics}

The evaluation of XACML policies starts from the evaluation of \Match elements and continues bottom-up until the evaluation of the root of the XACML element, i.e., the evaluation of \PS. 
\mynote{For each XACML element $X$ we denote by $\semantics{X}$ a semantic function associated to $X$. To each \Request element, this function assigns a value from a set of values that depends on the particular type of the XACML element $X$. 
For example, the semantic function $\semantics{X}$, where $X$ is a \Match element, ranges over the set $\set{\m, \nm, \idt}$, while its range is the set $\set{\t, \f, \idt}$ when $X$ is a \Condition element. A further explanation will be given below. }
An XACML component returns an indeterminate value whenever the decision cannot be made. This happens when there is an error during the evaluation process. See \cite{Ramli2011} for further explanation of the semantics of XACML~3.0. 

\myparagraph{Evaluation of \Match, \AllOf, \AnyOf and \Target Components.}
Let $X$ be either a \Match, an \AllOf, an \AnyOf or a \Target component and let $\mathbf{Q}$ be a set of all possible Requests.  A \textit{\Match semantic function} is a mapping
$ \semanticsM{X} : \mathbf{Q} \rightarrow \set{\m, \nm, \idt}$, where $\m, \nm$ and $\idt$ denote \textit{match}, \textit{no-match} and \textit{indeterminate}, respectively.  

Our evaluation of \Match element is based on equality function.\footnote{Our \Match evaluation is a simplification compared with \cite{XACML3.0}.} We check whether there are any attribute values in \Request element that match the \Match attribute value.


Let $\mc{Q}$ be a \Request element and let $\mc{M}$ be a \Match element. The evaluation of \Match $\mc{M}$ is as follows
\myequation{
\setlength{\abovedisplayskip}{0.3\abovedisplayskip}
\setlength{\belowdisplayskip}{0.3\belowdisplayskip}
\label{eq:match}
   \semanticsM{\mcM}(\mc{Q}) = 
   \begin{cases}
  \m   & \textrm{if }  \mcM \in \mc{Q} \textrm{ and } \error(\mcM) \notin \mc{Q} \\
  \nm & \textrm{if } \mcM \notin \mc{Q} \textrm{ and } \error(\mcM) \notin \mc{Q}\\
  \idt   & \textrm{if } \error(\mcM) \in \mc{Q}\\
    \end{cases}
}

The evaluation of \AllOf is a conjunction of a sequence of \Match elements. The value of \m, \nm\ and \idt\ corresponds to  true, false and undefined in 3-valued logic, respectively. 
   
Given a \Request $\mc{Q}$, the evaluation of \AllOf, $\mc{A} = \bigwedge_{i = 1}^n \mc{M}_i$, is as follows
\myequation{
\label{eq:allof}
   \semanticsM{\mc{A}}(Q) = 
   \begin{cases}
  \m   & \textrm{if } \forall i:   \semanticsM{\mc{M}_i}(Q) = \m\\
  \nm & \textrm{if } \exists i: \semanticsM{\mc{M}_i}(Q) = \nm \\
  \idt   & \textrm{otherwise }\\
    \end{cases}
}
where each $\mc{M}_i$ is a \Match element. 

The evaluation of \AnyOf element is a disjunction of a sequence of \AllOf elements. Given a \Request $\mc{Q}$, the evaluation of \AnyOf, $\mc{E} = \bigvee_{i = 1}^n \mc{A}_i$, is as follows
\myequation{
\label{eq:anyof}
   \semanticsM{\mc{E}}(\mc{Q}) = 
   \begin{cases}
  \m   & \textrm{if } \exists i: \semanticsM{\mc{A}_i}(\mc{Q}) = \m\\
  \nm & \textrm{if } \forall   i: \semanticsM{\mc{A}_i}(\mc{Q}) = \nm \\
  \idt   & \textrm{otherwise }\\
    \end{cases}
}
where each $\mc{A}_i$ is an \AllOf element. 

The evaluation of \Target element is a conjunction of a sequence of \AnyOf elements. An empty \Target, indicated by \texttt{null} attribute, is always evaluated to \m. Given a \Request $\mc{Q}$, the evaluation of \Target, $\mc{T} = \bigwedge_{i = 1}^n \mc{E}_i$, is as follows
\myequation{
\label{eq:target}
   \semanticsM{\mc{T}}(\mc{Q}) = 
   \begin{cases}
  \m   & \textrm{if } \forall i:   \semanticsM{\mc{E}_i}(Q) = \m \mbox{ or } \mc{T} = \texttt{null}\\
  \nm & \textrm{if } \exists i: \semanticsM{\mc{E}_i}(Q) = \nm \\
  \idt   & \textrm{otherwise }\\
    \end{cases}
}
where each $\mc{E}_i$ is an \AnyOf element. 

\myparagraph{Evaluation of \Condition.}
Let $X$ be a \Condition component and let $\mathbf{Q}$ be a set of all possible Requests.  A \textit{\Condition semantic function} is a mapping
$ \semanticsC{X} : \mathbf{Q} \rightarrow \set{\t, \f, \idt}$, where $\t, \f$ and $\idt$ denote \textit{true}, \textit{false} and \textit{indeterminate}, respectively. 

\mynote{The evaluation of \Condition element is based on the evaluation of  its Boolean function as described in its element. To keep it abstract, we do not specify specific functions; however, we use an unspecified function, \eval,  that returns $\Set{\t, \f, \idt}$.} 



Given a \Request $\mc{Q}$, the evaluation of \Condition $\mc{C}$ is as follows
\myequation{
\label{eq:condition}
   \semanticsC{\mc{C}}(\mc{Q}) =  \eval(\mc{C}, \mc{Q})
}

\myparagraph{Evaluation of \Rule.}
Let $X$ be a \Rule component and let $\mathbf{Q}$ be a set of possible Requests. A \textit{\Rule semantic function}  is a mapping 
$ \semanticsR{X} : \mcQ \rightarrow \Set{\p, \d, \ip, \id, \na}$,  where $\p, \d, \ip, \id$ and $\na$ correspond to \textit{permit}, \textit{deny}, \textit{indeterminate permit}, \textit{indeterminate deny} and $not-applicable$, respectively. 

Given a \Request $\mc{Q}$, the evaluation of \Rule $\mcRid = [E, \mc{T}, \mc{C}]$ is as follows
\myequation{
\label{eq:rule}
   \semanticsR{\mcRid}(\mc{Q}) = 
   \begin{cases}
  E & \textrm{if } \semanticsM{\mc{T}}(\mc{Q}) = \m \mbox{ and } \semanticsC{\mc{C}}(\mc{Q}) = \t\\
  \na& \textrm{if } (\semanticsM{\mc{T}}(\mc{Q}) = \m \mbox{ and } \semanticsC{\mc{C}}(\mc{Q}) = \f)  \mbox{ or } \semanticsM{\mc{T}}(\mc{Q}) = \nm \\
  \mathsf{i}_E& \textrm{otherwise}
    \end{cases}
}
where  $E$ is an effect, $E \in \Set{\p, \d}$,  $\mc{T}$ is a \Target element and $\mc{C}$ is a \Condition element.

\myparagraph{Evaluation of \Policy  and \PS.}
Let $X$ be either a \Policy or a \PS component and let $\mathbf{Q}$ be a set of all possible Requests. A \emph{\Policy semantic function} is a mapping
$ \semanticsP{X} : \mcQ \rightarrow \Set{\p, \d, \ip, \id, \idp, \na}$, where $\p, \d, \ip, \id, \idp$ and $\na$ correspond to \textit{permit}, \textit{deny}, \textit{indeterminate permit}, \textit{indeterminate deny}, \textit{indeterminate deny permit} and $not-applicable$, respectively. 

Given a \Request $\mc{Q}$, the evaluation of \Policy $\mcPid = [T, \seq{\mc{R}_1, \ldots, \mc{R}_n}, \comb]$ is as follows
\myequation{
\label{eq:policy}
   \semanticsP{\mcPid}(\mc{Q}) = 
   \begin{cases}
  \id   & \textrm{if } \semanticsM{\mc{T}}(\mc{Q}) = \idt \mbox{ and } \bigoplus_{\comb}(\mb{R})  = \d \\
  \ip   & \textrm{if } \semanticsM{\mc{T}}(\mc{Q}) = \idt \mbox{ and } \bigoplus_{\comb}(\mb{R}) = \p \\
  \na& \textrm{if } \semanticsM{T}(Q) = \nm \mbox{ or } \forall i: \semanticsR{R_i}(Q) = \na\\
  \bigoplus_{\comb}(\mb{R})& \mbox{otherwise}
    \end{cases}
}
where $\mc{T}$ is a \Target element, and each $\mc{R}_i$ is a \Rule element. We use $\mb{R}$ to denote $\seq{\semantics{\mc{R}_1}(\mc{Q}), \ldots, \semantics{\mc{R}_n}(\mc{Q})}$.

 \noindent \textit{Note:} The combining algorithm denoted by $\bigoplus_{\comb}$ will be explained in Sect.~\ref{ss:combining algorithms}.

The evaluation of \PS is exactly like the evaluation of \Policy except that it differs in terms of input parameter. While in \Policy we use a sequence of \Rule elements as an input, in the evaluation of \PS we use a sequence of \Policy or \PS elements. 

\subsection{XACML Combining Algorithms}
\label{ss:combining algorithms}
There are four common combining algorithms defined in XACML 3.0, namely permit-overrides (\po), deny-overrides (\denyo), first-applicable (\fa) and only-one-applicable (\ooa).  \mynote{In this paper, we do not consider the deny-overrides combining algorithm
since it is the mirror of the permit-overrides combining algorithm.}

\myparagraph{Permit-Overrides (\po) Combining Algorithm.}
The permit-overrides combining algorithm is intended for use if a permit decision should have priority over a deny decision. This algorithm has the following behaviour \cite{XACML3.0}. 
\begin{enumerate*}

\item If any decision is ``permit'', the result is ``permit''. 
\item Otherwise, if any decision is ``indeterminate deny permit'', the result is ``indeterminate deny permit''. 
\item Otherwise, if any decision is ``indeterminate permit'' and another decision is ``indeterminate deny'' or ``deny'', the result is ``indeterminate deny permit''. 
\item Otherwise, if any decision is ``indeterminate permit'', the result is ``indeterminate permit''.
\item Otherwise, if decision is ``deny'', the result is ``deny''. 
\item Otherwise, if any decision is ``indeterminate deny'', the result is ``indeterminate deny''. 
\item Otherwise, the result is ``not applicable''. 
\end{enumerate*}

Let $\seq{s_1, \ldots, s_n}$ be a sequence of element of $\Set{\p, \d, \ip, \id, \idp, \na}$. The \textit{permit-overrides combining operator} is defined as follows
\myequation{
\label{eq:po}
   \bigoplus_{po}(\seq{s_1, \ldots, s_n}) = 
   \begin{cases}
   \p & \textrm{if } \exists i: s_i = \p \\
   \idp & \textrm{if } \forall i: s_i \neq \p \textrm{ and } \\ 
   & \phantom{if }(\exists j: s_j = \idp  \\
   & \phantom{if (} \textrm{ or } ( \exists j, j': s_j= \ip \textrm{ and } (s_{j'} = \id \textrm{ or } s_{j'}= \d) )\\
   \ip & \textrm{if } \exists i: s_i= \ip \textrm{ and } \forall j: s_j \neq \ip \Rightarrow s_j = \na \\
   \d & \textrm{if } \exists i: s_i = \d \textrm{ and } \forall j: s_j \neq \d \Rightarrow ( s_j = \id \textrm{ or } s_j = \na )\\
  \id & \textrm{if } \exists i: s_i = \id \textrm{ and } \forall j: s_j \neq \id \Rightarrow s_j = \na \\ 
  \na & \textrm{otherwise}
   \end{cases}
}

\myparagraph{First-Applicable (\fa) Combining Algorithm.}
Each \Rule must be evaluated in the order in which it is listed in the \Policy. If a particular \Rule is applicable, then the result of first-applicable combining algorithm must  be the result of evaluating the \Rule. If the \Rule is ``not applicable'' then the next \Rule in the order must be evaluated. If no further \Rule in the order exists, then the first-applicable combining algorithm must return ``not applicable''. 

Let $\seq{s_1, \ldots,s_n}$ be a sequence of element of $\Set{\p, \d, \ip, \id, \idp, \na}$. The \textit{first-applicable combining operator} is defined as follows: 
\myequation{
\label{eq:fa}
   \bigoplus_{\fa}(\seq{s_1, \ldots, s_n}) = 
\begin{cases}
   s_i & \textrm{if }\exists i : s_i \neq \na \textrm{ and } \forall j : (j < i) \Rightarrow (s_j = \na)\\
   \na & \textrm{otherwise}
\end{cases}
}

\myparagraph{Only-One-Applicable (\ooa) Combining Algorithm.}
If only one \Policy is considered applicable by evaluation of its \Target, then the result of the only-one-applicable combining algorithm must the result of evaluating the \Policy.  If in the entire sequence of \Policy elements in the \PS, there is no \Policy that is applicable, then the result of the only-one-applicable combining algorithm must be ``not applicable''. If more than one \Policy is considered applicable, then the result of the only-one-applicable combining algorithm must be ``indeterminate''.  

Let $\seq{s_1, \ldots,s_n}$ be a sequence of element of $\Set{\p, \d, \ip, \id, \idp, \na}$. The \textit{only-one-applicable combining operator} is defined as follows:
\myequation{
\label{eq:ooa}
   \bigoplus_{\ooa}(\seq{s_1, \ldots, s_n}) = 
   \begin{cases}
   \idp & \textrm{if } (\exists i : s_i = \idp) \textrm{ or } \\
   & \phantom{if } (\exists i, j  : i \neq j \textrm{ and } s_i = ( \d \textrm{ or } \id ) \land s_j =  ( \p \textrm{ or } \ip ) )\\ 
   \id & \textrm{if } (\forall i: s_i \neq (\p \textrm{ or } \ip \textrm{ or } \idp)) \textrm{ and} \\
   & \phantom{if } ((\exists j: s_j = \id) \textrm { or } (\exists j, k: j \neq k \textrm{ and } s_j = s_k = \d) )\\
   \ip & \textrm{if }(\forall i: s_i \neq (\d \textrm{ or } \id \textrm{ or } \idp)) \textrm{ and} \\
   & \phantom{if } ((\exists j: s_j = \ip) \textrm { or } (\exists j, k: j \neq k \textrm{ and } s_j = s_k = \p) )\\
   s_i& \textrm{if } \exists i:  s_i \neq \na \textrm{ and } \forall j: j \neq i \Rightarrow s_j = \na \\
   \na & \textrm{otherwise}
   \end{cases}
}
\section{Transforming XACML Components into Logic Programs}
\label{s:transformation}
In this section we show, step by step, how to transform XACML~3.0 components into logic programs. We begin by introducing the syntax of logic programs (LPs). Then we show the transformation of XACML component into LPs starting from \Request element  to \PS element. We also present transformations for combining algorithms. The transformation of each XACML element is based on its formal semantics explained in Sect.~\ref{ss:formal semantics} and  Sect.~\ref{ss:combining algorithms}.

\subsection{Preliminaries}
We recall basic notation and terminology that we use in the remainder of this paper. 

\myparagraph{First-Order Language.}
 We consider an \emph{alphabet} consisting of (finite or countably infinite) disjoint sets of variables, constants, function symbols, predicate symbols, connectives $\Set{\mathbf{not}, \wedge, \la}$,  punctuation symbols $\Set{\hbox{``('', ``,'', ``)'', ``.''}}$ and special symbols $\Set{\top, \bot}$. We use upper case letters to denote variables and lower case letters to denote constants, function and predicate symbols. Terms, atoms, literals and formulae are defined as usual. The \emph{language} given by an alphabet  consists of the set of all formulae constructed from the symbols occurring in the alphabet.

\myparagraph{Logic Programs.}
 A \emph{rule}  is an expression of the form 
\myequation{\label{rule} A \la B_1 \wedge \dotsb \wedge B_m \wedge \mynot B_{m+1} \wedge \dotsb \wedge \mynot B_n. } 
where $A$ is either an atom or $\bot$ and each $B_i$, $1 \leq i \leq n$, is an atom or $\top$. $\top$ is a valid formula. We usually write $B_1 \wedge \dotsb \wedge B_m \wedge \mynot B_{m+1} \wedge \dotsb \wedge \mynot B_n$ simply as $B_1, \dotsc, B_m, \mynot B_{m+1}, \dotsc, \mynot B_n$.  We call  the rule as a \emph{constraint} when $A = \bot$. 
One should observe that the body of a rule must not be empty. A \emph{fact} is a rule of the form $A \la \top$. 

 
A \emph{logic program} is a finite set of rules. We denote \ground{\Prog} for the set of all ground instances of rules in the program \Prog.


\subsection{XACML  Components Transformation into Logic Programs}
The transformation of XACML components is based on the semantics of each component explained in Sect.~\ref{ss:formal semantics}.  

\subsubsection{\Request Transformation.} \textit{XACML Syntax}: Let $\mc{Q} = \Set{\mathit{cat}_1(a_1), \ldots, \mathit{cat}_n(a_n)}$ be a \Request component. We transform all members of \Request element into facts. The transformation of \Request, $\mc{Q}$, into LP $\Prog_{\mc{Q}}$ is as follows
\prog{\mathit{cat}_i(a_i) & \la \top. \ \ 1 \leq i \leq n}

\subsubsection{XACML Policy Components Transformation.}
\mynote{We use a two-place function $\val$ to indicate the semantics of XACML components where the first argument is the name of XACML component and the second argument is its value.} Please note that the calligraphic font in each transformation indicates the XACML component's name, that is, it does not represent a variable in LP. 

\myparagraph{Transformation of \Match, \AnyOf, \AllOf and \Target Components.}
Given a semantic equation of the form 
$\semantics{X}_{V}(\mcQ) = v \textrm{ if } \cond_1 \textrm{ and } \dotsc \textrm{ and } \cond_n$, 
we produce a rule of the form 
$\val(X, v)  \la \cond_1, \dotsc, \cond_n.$
Given a semantic equation of the form 
$\semantics{X}_{V}(\mcQ) = v \textrm{ if } \cond_1 \textrm{ or } \dotsc \textrm{ or } \cond_n$, we produce a rule of the form 
$\val(X, v)  \la \cond_i.\  1 \leq i \leq n$.
For example, the \Match evaluation $\semanticsM{\mcM}(\mcQ) = \m \textrm{ if }  \cat(a) \in \mcQ \textrm{ and } \error(\cat(a)) \notin \mcQ$ is transformed into a rule in the form $\val(\mcM, \m) \la \mcM,$ $\mynot \error(\mcM).$ \mynote{The truth value of $\mcM$ depends on whether $\mcM \la \top$ is in $\Prog_{\mcQ}$ and the same is the case also for the truth value of $\error(\mcM)$. }

Let $\mcM$ be a \Match component.  The transformation of \Match $\mcM$  into LP  $\Prog_{\mcM}$ is as follows (see \eqref{eq:match} for \Match evaluation)
\prog{
\val(\mcM, \m)  & \la \mcM, \mynot \error(\mcM). \\
\val(\mcM, \nm)& \la \mynot \cat(a), \mynot \error(\mcM). \\
\val(\mcM, \idt) & \la \error(\mcM).  
}

Let $\mcA =\bigwedge_{i=1}^n \mcM_i$ be an \AllOf component where each $\mcM_i$ is a \Match component. The transformation of \AllOf $\mc{A}$  into LP $\Prog_{\mcA}$ is as follows (see \eqref{eq:allof} for \AllOf evaluation)
\prog{
\val(\mcA, \m)   & \la \val(\mcM_1, \m), \dotsc, \val(\mcM_n, \m).  \\
\val(\mcA, \nm) & \la \val(\mcM_i, \nm).\  (1 \leq i \leq n) \\
\val(\mcA, \idt)  & \la \mynot \val(\mcA,\m), \mynot \val(\mcA, \nm). 
}

Let $\mcE = \bigvee_{i = 1}^{n} \mcA_i$ be an \AnyOf component where each $\mcA_i$ is an \AllOf component. The transformation of \AnyOf $\mcE$  into LP $\Prog_{\mc{E}}$ is as follows (see \eqref{eq:anyof} for \AnyOf evaluation)
\prog{
\val(\mcE, \m) & \la \val(\mcA_i, \m). \ (1 \leq i \leq n) \\
\val(\mcE, \nm)   & \la \val(\mcA_1, \nm), \dotsc, \val(\mcA_n, \nm).  \\
\val(\mcE, \idt)  & \la \mynot \val(\mcA,\m), \mynot \val(\mcE, \nm). 
}

Let $\mcT = \bigwedge_{i=1}^{n} \mcT_i$ be a \Target component where each $\mcE_i$ is an \AnyOf component. The transformation of \Target $\mcT$  into LP $\Prog_{\mcT}$ is as follows (see \eqref{eq:target} for \Target evaluation)
\prog{
\val(\mathsf{null}, \m) & \la \top. \\
\val(\mcT, \m)   & \la \val(\mcE_1, \m), \dotsc, \val(\mcE_n, \m).  \\
\val(\mcT, \nm) & \la \val(\mcE_i, \nm).\  (1 \leq i \leq n) \\
\val(\mcT, \idt)  & \la \mynot \val(\mcT,\m), \mynot \val(\mcT, \nm). 
}


\myparagraph{Transformation of \Condition Component.}
The transformation of \Condition $\mcC$ into LP $\Prog_{\mcC}$ is as follows
\prog{
\val(\mcC, V) & \la \eval(\mcC, V). 
}
Moreover, the transformation of \Condition also depends on the transformation of $\eval$ function into LP. Since we do not  describe specific $\eval$ functions, we leave this transformation to the user. 
\begin{example}
A possible  \eval function for "rule r1: patient only can see his or her patient record" is 
\prog{\Prog_{\cond(r1)}:\\
\val(\cond(r1), V) & \la  \eval(\cond(r1), V). \\
\eval(\cond(r1), \t) & \la patient\_id(X), patient\_record\_id(X), \\
& \phantom{\la{}} \mynot \error(patient\_id(X)), \mynot \error(patient\_record\_id(X)). \\
\eval(\cond(r1), \f) & \la patient\_id(X), patient\_record\_id(Y), X \neq Y, \\
& \phantom{\la{}} \mynot \error(patient\_id(X)), \mynot \error(patient\_record\_id(Y)). \\
\eval(\cond(r1), \idt) & \la \mynot \eval(\cond(r1), \t), \mynot \eval(\cond(r1), \f).
}
The $\error(patient\_id(X))$ and $\error(patient\_record\_id(X))$ indicate possible errors that might occur, e.g., the system could not connect to the database so that the system does not know the ID of the patient. \hfill $\Box$
\end{example}

\myparagraph{Transformation of \Rule Component.}
The general step of the transformation of \Rule component is similar to the transformation of \Match component. 

Let $\mcR = [e, \mcT, \mcC]$ be a \Rule component where $e \in \Set{\p, \d}$, $\mcT$ is a \Target and $\mcC$ is a \Condition.  The transformation of \Rule $\mcR$ into LP $\Prog_{\mcR}$ is as follows (see \eqref{eq:rule} for \Rule evaluation)
\prog{
\val(\mcR, e) & \la \val(\mcT, \m), \val(\mcC, \t). \\
\val(\mcR, \na) & \la \val(\mcT, \m), \val(\mcC, \f). \\
\val(\mcR, \na) & \la \val(\mcT, \nm). \\
\val(\mcR, \mathsf{i}_{e}) & \la \mynot \val(\mcR, e), \mynot \val(\mcR, \na).
}

\myparagraph{Transformation of \Policy and \PS Components.}
Given a \Policy component $\mcPid = [\mcT, \seq{\mcR_1, \ldots, \mcR_n}, \comb]$ where $\mcT$ is a \Target, $\seq{\mcR_1, \ldots, \mcR_n}$ is a sequence of \Rule elements and $\comb$ is a combining algorithm identifier. 
In order to indicate that the \Policy contains \Rule $\mcR_i$, for every \Rule $\mcR_i \in \seq{\mcR_1, \ldots, \mcR_n}$,  $\Prog_{\mcPid}$ contains:
\prog{
\dec(\mcPid, \mcR_i, V) & \la \val(\mcR_i, V). \ (1 \leq i \leq n)
}

The transformation for \Policy $\Prog$  into LP $\Prog_{\mcPid}$ is as follows (see \eqref{eq:policy} for \Policy evaluation)

\prog{
\val(\mcPid, \id) & \la \val(\mc{T}, \idt), \algo(\comb, \mcPid, \d). \\
\val(\mcPid, \ip) & \la \val(\mc{T}, \idt), \algo(\comb, \mcPid, \p). \\
\val(\mcPid, \na) & \la \val(\mc{T}, \nm). \\
\val(\mcPid, \na) & \la \val(\mc{R}_1, \na), \ldots, \val(\mc{R}_n, \na). \\
\val(\mcPid, V') & \la \val(\mc{T}, \m), \dec(\mcPid, \mc{R}, V), V \neq \na, \algo(\comb, \mcPid, V'). \\
\val(\mcPid, V') & \la \val(\mc{T}, \idt), \dec(\mcPid, \mc{R}, V), V \neq \na, \algo(\comb, \mcPid, V'),  V' \neq \p.\\
\val(\mcPid, V') & \la \val(\mc{T}, \idt), \dec(\mcPid, \mc{R}, V), V \neq \na, \algo(\comb, \mcPid, V'), V' \neq \d. \\
}
We write a formula $\dec(\mcPid, \mc{R}, V), V \neq \na$ to make sure that there is a \Rule in the \Policy that is not evaluated to \na. We do this to avoid a return value from a combining algorithm that is not \na, even tough all of the \Rule elements are evaluated to \na. The transformation of \PS is similar to the transformation of \Policy component.

\subsection{Combining Algorithm Transformation}
 
\mynote{We define  generic LPs for permit-overrides combining algorithm and only-one-applicable combining algorithm. Therefore, we use a variable  $P$ to indicate a variable over \Policy identifier and $R$, $R_1$ and $R_2$ to indicate  variables over \Rule identifiers.
In case the evaluation of \PS, the input $P$ is for \PS identifier, $R, R_1$ and $R_2$ are for \Policy (or \PS) identifiers. }

\myparagraph{Permit-Overrides Transformation.}
Let $\Prog_{\po}$ be a LP obtained by permit-overrides combining algorithm transformation (see \eqref{eq:po} for the permit-overrides combining algorithm semantics). $\Prog_{\po}$ contains: 
\prog{
\algo(\po, P, \p) & \la \dec(P, R, \p). \\
\algo(\po, P, \idp) & \la \mynot \algo(\po, P, \p), \dec(P, R, \idp). \\
\algo(\po, P, \idp) & \la \mynot \algo(\po, P, \p), \dec(P, R_1, \ip), \dec(P, R_2, \d). \\
\algo(\po, P, \idp) & \la \mynot \algo(\po, P, \p), \dec(P, R_1, \ip), \dec(P, R_2, \id). \\
\algo(\po, P, \ip) & \la \mynot \algo(\po, P, \p), \mynot \algo(\po, P, \idp), \dec(P, R, \ip).\\
\algo(\po, P, \d) & \la \mynot \algo(\po, P, \p), \mynot \algo(\po, P, \idp), \mynot \algo(\po, P, \ip), \\
& \phantom{\la{}} \dec(P, R, \d).  \\
\algo(\po, P, \id) & \la  \mynot \algo(\po, P, \p), \mynot \algo(\po, P, \idp),  \mynot \algo(\po, P, \ip),\\
& \phantom{\la{}} \mynot \algo(\po, P,\d),  \dec(P, R, \id).\\
\algo(\po, P, \na) & \la  \mynot \algo(\po, P, \p), \mynot \algo(\po, P, \idp), \mynot \algo(\po, P, \ip), \\
& \phantom{\la{}} \mynot \algo(\po, P,\d),  \mynot \algo(\po, P, \id).
}

\myparagraph{First-Applicable Transformation.}
Let $\Prog_{\fa}$ be a logic program obtained by first-applicable combining algorithm transformation (see \eqref{eq:fa} for the first-applicable  combining algorithm semantics). For each \Policy (or \PS) which uses this combining algorithm, $\mcPid = [\mcT, \seq{\mcR_1, \ldots, \mcR_n}, \fa]$,  $\Prog_{\mcPid}$ contains:
\prog{
\algo(\fa, \mcPid, E) & \la \dec(\mcPid, \mcR_1, V), V \neq \na. \\
\algo(\fa, \mcPid, E) & \la \dec(\mcPid, \mcR_1, \na), \dec(\mcPid, \mcR_2, E), E \neq \na.\\
& \vdots\\
\algo(\fa, \mcPid, E) & \la \dec(\mcPid, \mcR_1, \na), \ldots, \dec(\mcPid, \mcR_{n-1}, \na), \dec(\mcPid, R_n, E).\\
}

\myparagraph{Only-One-Applicable Transformation.}
Let $\Prog_{\ooa}$ be a logic program obtained by only-one-applicable combining algorithm transformation (see \eqref{eq:ooa} for the only-one-applicable combining algorithm semantics). $\Prog_{\ooa}$ contains:
{\footnotesize\prog{
\algo(\ooa, P, \idp) & \la \dec(P, R, \idp). \\
\algo(\ooa, P, \idp) & \la \dec(P, R_1, \id), \dec(P, R_2, \ip), R_1 \neq R_2.\\
\algo(\ooa, P, \idp) & \la \dec(P, R_1, \id), \dec(P, R_2, \p), R_1 \neq R_2.\\
\algo(\ooa, P, \idp) & \la \dec(P, R_1, \d), \dec(P, R_2, \ip), R_1 \neq R_2.\\
\algo(\ooa, P, \idp) & \la \dec(P, R_1, \d), \dec(P, R_2, \p), R_1 \neq R_2.\\
\algo(\ooa, P, \ip) & \la \mynot \algo(\ooa, P, \idp),  \dec(P, R, \ip). \\
\algo(\ooa, P, \ip) & \la \mynot \algo(\ooa, P, \idp),  \dec(P, R_1, \p), \dec(P, R_2, \p), R_1 \neq R_2.\\
\algo(\ooa, P, \id) & \la \mynot \algo(\ooa, P, \idp),  \dec(P, R, \id). \\
\algo(\ooa, P, \id) & \la \mynot \algo(\ooa, P, \idp),  \dec(P, R_1, \d), \dec(P, R_2, \d), R_1 \neq R_2. \\
\algo(\ooa, P, \p) & \la \mynot \algo(\ooa, P, \idp), \mynot(\ooa, P, \id), \mynot(\ooa, P, \ip),  \dec(P, R, \p).\\
\algo(\ooa, P, \d) & \la \mynot \algo(\ooa, P, \idp), \mynot(\ooa, P, \id), \mynot(\ooa, P, \ip),  \dec(P, R, \d). \\ 
\algo(\ooa, P, \na) & \la \mynot \algo(\ooa, P, \idp), \mynot(\ooa, P, \id), \mynot(\ooa, P, \ip), \\
& \phantom{\la{}}  \mynot \dec(P, R, \d), \mynot \dec(P, R, \p). \\
}
}

\section{Relation between XACML-ASP and XACML 3.0 Semantics}
\label{s:xacml-asp}
In this section we discuss the relationship between the ASP semantics and XACML 3.0 semantics. First, we recall the semantics of logic programs based on their answer sets. Then, \mynote{we show that the program obtained from transforming XACML components into LPs (\PXACML) merges with the query program ($\Prog_{\mcQ}$) and has a unique answer set that the answer set corresponds to the semantics of XACML 3.0.}

\subsection{ASP Semantics}
The declarative semantics of a logic program is given by a model-theoretic semantics  of formulae in the underlying language.  The formal definition of answer set semantics can be found in much literature such as \cite{Baral2003,Gelfond2007}.

The answer set semantics of logic program \Prog assigns to \Prog a collection of \emph{answer sets} --  interpretations  of \ground{\Prog}. An interpretation $I$ of \ground{\Prog} is an answer set for  \Prog if $I$ is minimal (w.r.t. set inclusion) among the  interpretations satisfying the rules of 
\[\begin{array}{ll}
\Prog^I = \{ A \la B_1, \ldots, B_m | & A \la B_1, \ldots, B_m, \mynot B_{m+1}, \ldots, \mynot B_{n} \in \Prog \mbox{ and} \\
 & I(\mynot B_{m+1}, \ldots, \mynot B_{n}) = \mathit{true} \} 
\end{array}
\]
A logic program can have a single unique answer set, many or no answer set(s). Therefore, we show that programs with a particular characteristic are guaranteed to have a unique answer set. 

\myparagraph{Acyclic Programs.} We say that a program is \emph{acyclic} when there is no cycle in the program.The acyclicity in the program  is guaranteed by the existence of a certain fixed assignment of natural numbers to atoms that is called a \emph{level mapping}.

A \emph{level mapping} for a program \Prog is a function 
\[l: \mc{B}_\Prog \ra \mathbf{N}\]
where $\mathbf{N}$ is the set of natural numbers and $\mc{B}_\Prog$ is the Herbrand base for \Prog.  We extend the definition of level mapping to a mapping from ground literals to natural numbers by setting $l(\mynot A) =  l(A)$. 

Let \Prog be a logic program and $l$ be a level mapping for \Prog. \Prog is \emph{acyclic with respect to l} if for every clause $A \la B_1, \ldots, B_m, \mynot B_{m+1}, \ldots, \mynot B_n$ in \ground{\Prog} we find  
\[
l(A) > l(B_i) \ \ \textrm{for all $i$ with $1 \leq i \leq n$}
\] \Prog is \emph{acyclic} if it is acyclic with respect to some degree of level mapping.  Acyclic programs are guaranteed to have a unique answer set \cite{Baral2003}.

\subsection{XACML Semantics Based On ASP Semantics}
We can see from Sect.~\ref{s:transformation} that all of the XACML 3.0 transformation programs are acyclic. Thus, it is guaranteed that  $\Prog_{\XACML}$ has a unique answer set. 

\begin{proposition}
\label{prop:xacml-asp}
Let $\Prog_{\XACML}$ be a program obtained from  XACML 3.0 element transformations and let $\Prog_{\mc{Q}}$ be a program transformation of \Request \mc{Q}.  Let $I$ be the answer set of $\Prog_{\XACML} \cup \Prog_{\mc{Q}}$. Then the following equation holds
\[ \semantics{X}(\mc{Q}) = V \mbox{ iff } \val(X,V) \in I\]
where $X$ is an XACML component. 
\end{proposition}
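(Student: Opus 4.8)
\myparagraph{Proof plan.} The argument is by structural induction on the XACML component $X$, running through the hierarchy \Match, \AllOf, \AnyOf, \Target, \Condition, \Rule, \Policy, \PS. Two elementary facts are used everywhere. First, since $\Prog_{\XACML}\cup\Prog_{\mcQ}$ is acyclic (Sect.~\ref{s:xacml-asp}) it has the unique answer set $I$, and being an answer set $I$ is \emph{supported}: a ground atom $\alpha$ lies in $I$ if and only if some ground rule of the program with head $\alpha$ has all its body literals true in $I$. A level mapping witnessing acyclicity ranks the $\val(\cdot,\cdot)$, $\dec(\cdot,\cdot,\cdot)$ and $\algo(\cdot,\cdot,\cdot)$ atoms consistently with the component hierarchy, so this equivalence may be applied inductively along the mapping. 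Second, for every attribute atom $\cat(a)$ and every error atom $\error(\cat(a))$ one has $\cat(a)\in I \Leftrightarrow \cat(a)\in\mcQ$ and $\error(\cat(a))\in I \Leftrightarrow \error(\cat(a))\in\mcQ$, because such atoms occur as facts of $\Prog_{\mcQ}$ exactly when they belong to $\mcQ$ and never appear in the head of any rule of $\Prog_{\XACML}$.

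\myparagraph{Base and lower-level cases.} For a \Match $\mcM=\cat(a)$, comparing the three rules of $\Prog_{\mcM}$ with \eqref{eq:match} and invoking the fact above on $\cat(a)$ and $\error(\cat(a))$ gives $\val(\mcM,V)\in I \Leftrightarrow \semanticsM{\mcM}(\mcQ)=V$ for each $V\in\Set{\m,\nm,\idt}$, with exactly one such $V$. For a \Condition $\mcC$ the single rule $\val(\mcC,V)\la\eval(\mcC,V)$ reduces the claim to the natural requirement on the user-supplied translation of \eval, namely $\eval(\mcC,V)\in I \Leftrightarrow \eval(\mcC,\mcQ)=V$, which together with \eqref{eq:condition} settles \Condition. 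The steps for \AllOf, \AnyOf and \Target are direct matches of the transformation rules against \eqref{eq:allof}, \eqref{eq:anyof} and \eqref{eq:target}: for instance $\val(\mcA,\m)\in I$ iff every $\val(\mcM_i,\m)\in I$ iff (induction) every $\semanticsM{\mcM_i}(\mcQ)=\m$, the \nm\ clause is symmetric, and $\val(\mcA,\idt)\in I$ iff $\val(\mcA,\m)\notin I$ and $\val(\mcA,\nm)\notin I$, which by the two cases just settled is exactly the ``otherwise'' branch of \eqref{eq:allof}; \Target also uses the fact $\val(\mathsf{null},\m)\la\top$. The \Rule case combines the characterizations of $\val(\mcT,\cdot)$ and $\val(\mcC,\cdot)$ with the four rules of $\Prog_{\mcR}$, whose bodies reproduce the three branches of \eqref{eq:rule}, the $\mathsf{i}_e$ rule being the negation-as-failure complement of the other two.

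\myparagraph{The \Policy and \PS step.} For $\mcPid=[\mcT,\seq{\mcR_1,\ldots,\mcR_n},\comb]$ the step has two parts. From $\dec(\mcPid,\mcR_i,V)\la\val(\mcR_i,V)$ and the induction hypothesis on the \Rule elements, $\dec(\mcPid,\mcR_i,V)\in I \Leftrightarrow \semanticsR{\mcR_i}(\mcQ)=V$, so the $\dec$ atoms faithfully encode the sequence $\mb R$. Next one proves the sub-claim $\algo(\comb,\mcPid,V')\in I \Leftrightarrow \bigoplus_{\comb}(\mb R)=V'$ by a case analysis, for each $\comb\in\Set{\po,\fa,\ooa}$, comparing $\Prog_{\po}$, the first-applicable rules of $\Prog_{\mcPid}$, and $\Prog_{\ooa}$ with \eqref{eq:po}, \eqref{eq:fa} and \eqref{eq:ooa}: the layered $\mynot\algo(\comb,\mcPid,\cdot)$ guards implement exactly the ``if $\dots$ otherwise'' priority of the semantic clauses, and the first-applicable rules enumerate the disjunction ``the first non-$\na$ entry''. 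Combining the sub-claim with the value of $\val(\mcT,\cdot)$, the rules defining $\val(\mcPid,\cdot)$ are shown to realize the four-way case split of \eqref{eq:policy}: $\val(\mcT,\idt)$ with $\algo(\comb,\mcPid,\d)$ (resp.\ $\algo(\comb,\mcPid,\p)$) yields $\id$ (resp.\ $\ip$); $\val(\mcT,\nm)$, or all \Rule values being $\na$, yields $\na$; otherwise the value equals $\bigoplus_{\comb}(\mb R)$. The \PS case is the same argument verbatim with \PS and \Policy elements in place of \Policy and \Rule elements, and since every $\semantics{X}$ is a total function the established ``iff'' also shows that exactly one $V$ satisfies $\val(X,V)\in I$, closing the induction.

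\myparagraph{Where the difficulty lies.} Everything up to \Rule is essentially bookkeeping. The real content is the \Policy/\PS step, and inside it the combining-algorithm sub-claim: one must verify that the stratified negation in $\Prog_{\po}$ and $\Prog_{\ooa}$ reproduces the nested priorities of \eqref{eq:po}--\eqref{eq:ooa} with neither gap nor overlap, and that the interaction of an indeterminate \Target with the combined decision is captured correctly by the last rules of $\Prog_{\mcPid}$. This is the least mechanical part of the argument and the point where a mismatch between the transformation and the semantics would most plausibly hide.
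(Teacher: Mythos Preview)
Your proposal is correct and follows essentially the same route as the paper. The paper factors the argument into three preliminary lemmas establishing supportedness (your ``$\alpha\in I$ iff some rule with head $\alpha$ has its body true in $I$'' is exactly their Lemmas~\ref{l:1} and~\ref{l:3} combined) and then proves a separate proposition for each component type---\Match, \AllOf, \AnyOf, \Target, \Condition, \Rule, the three combining algorithms, and \Policy/\PS---before collecting everything into a final corollary; you package the same content as a single structural induction with the combining-algorithm correspondence as a sub-claim inside the \Policy step, but the mathematical substance and the case analyses are identical.
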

\textit{Note:} We can see that there is no cycle in all of the program transformations. Thus, there is a guarantee that the answer set of $\PXACML \cup \Prog_{\mc{Q}}$ is unique. 
The transformation of each component into a logic program is based on exactly the definition of its XACML evaluation.  The proof of this proposition can be seen in the extended version in \cite{Ramli2013}.

\section{Analysis XACML Policies Using Answer Set Programming}
\label{s:analysis}

In this section we show how to use ASP for analysing access control security  properties through \PXACML. In most cases, ASP solver can solve combinatorial problems efficiently. There are several combinatorial problems in analysis access control policies, e.g., gap-free property and conflict-free property \cite{Samarati2001,Bruns2008}. In this section we look at gap-free analysis since in XACML 3.0 conflicts never occur.\footnote{A conflict  decision never occurs when we strictly use the standard combining algorithm defined in XACML 3.0, since every combining algorithm always return one value.}  We also present a mechanism for the verification of security properties against a set of access control policies. 

\subsection{Query Generator}
In order to analyse access control property, sometimes we need to analyse all possible queries that might occur. We use \emph{cardinality constraint} (see \cite{Simons2002,Syrjanen}) to generate all possible values restored in the database for each attribute. For example, we have the following generator:
\prog{\mc{P}_{generator}:\\
(1)\ \  1 \{ subject(X) : subject\_db(X)\} 1 & \la \top.   \\
(2)\ \  1 \{ action(X) : action\_db(X) \} 1 & \la \top. \\
(3)\ \  1 \{ resource(X) : resource\_db(X) \} 1 &\la \top.  \\
(4)\ \  1 \{ environment(X) : environment\_db(X) \} 1 & \la \top. \\
}
The first line of the encoding means that we only consider one and only one $subject$ attribute value obtained from the subject database. The rest of the encoding means the same as the $subject$ attribute.  

\subsection{Gap-Free Analysis}
A set of policies is \emph{gap-free} if there is no access request for which there is an absence of decision. XACML  defines that there is  one \PS as the root of a set of policies. Hence, we say that there is a gap whenever we can find a request that makes the semantics of the $\mcPS_{\!\mathit{root}}$ is assigned to \na.  We force ASP solver to find the gap by the following encoding.
\prog{\Prog_{gap}:\\
\gap & \la \val(\mcPS_{\!\mathit{root}}, \na). \\
\bot & \la \mynot \gap. 
}
In order to make sure that a set of policies is gap-free we should generate all possible requests and test whether  at least one request  is not captured by the set of policies. Thus, the answer sets  of program $\mc{P} = \PXACML \cup \Pgenerator \cup \Prog_{\!\mathit{gap}}$ are witnesses that the set of policies encoded in $\PXACML $ is incomplete. When there is no model that satisfies the program then we are sure that the set of policies captures all of possible cases. 
 
\subsection{Property Analysis}
The problem of verifying a security property $\Phi$ on XACML policies is not only to show that the property $\Phi$ holds on $\PXACML$ but also that we want to see the witnesses whenever the property $\Phi$ does not hold in order to help the policy developer refine the policies. Thus, we can see this problem as finding models for $\PXACML\cup \Pgenerator \cup \Prog_{\neg \Phi}$. The founded model is the witness that the XACML policies cannot satisfy the property $\Phi$. 

\begin{example}
Suppose we have a security property:
\begin{center} $\Phi$: An anonymous person \textbf{cannot} read any patient records. \end{center}
Thus, the negation of property $\Phi$ is as follows
\begin{center}$\neg \Phi$: An anonymous person \textbf{can} read any patient records. \end{center}
We define that anonymous persons are those who are neither patients, nor guardians, nor doctors, nor nurses. We encode $\mc{P}_{\neg \Phi}$ as follows
\prog{
(1)\ anonymous & \la \mynot subject(patient), \mynot subject(guardian), \\
& \phantom{\la{}} \mynot subject(doctor), \mynot subject(nurse). \\
(2)\ \bot & \la \mynot anonymous. \\
(3)\ action(read) & \la \top. \\
(4)\ resource(patient\_record) & \la \top. \\
(5)\ \bot & \la \mynot \val(PS_{root}, \p).\\
}
We list all of the requirements (lines 1 -- 4). We force the program to find an anonymous person (line 2). Later we  force that the returned decision should be to permit (line 5).  When the program $\PXACML \cup \Pgenerator \cup \Prog_{\neg \Phi}$ returns models, we conclude that the property $\Phi$ does not hold and the returned models are  the flaws in the policies. On the other hand, we conclude that the property $\Phi$ is satisfied if no model is found.  
\end{example}
\vspace{-10pt}
\section{Related Work}
\label{s:related work}
\vspace{-5pt}
There are some approaches to defining AC policies in LPs, such as Barker \textit{et al. } in \cite{Barker2003} use constraint logic program to define role-based access control, Jajodia \textit{et al. } in \cite{Jajodia1997} using FAM / CAM program -- a logical language that uses  a fixed set of predicates. However, their approaches are based on their own access control policy language whereas our approach is to define a well-known access control policy language, XACML. 

Our approach is inspired by the work of Ahn \textit{et al.} \cite{Ahn2010,Ahn2010a}.
There are three main differences between our approach and the work of Ahn \textit{et al.} 

First, while they consider XACML version 2.0 \cite{xacml2.0}, we address the newer version, XACML 3.0.  The main difference between XACML 3.0 and XACML 2.0 is the treatment of indeterminate values.  As  a consequence, the combining algorithms in XACML 3.0 are more complex than the ones in XACML 2.0.  XACML 2.0 only has a single indeterminate value while XACML 3.0 distinguishes between the following three types of indeterminate values:
\begin{enumerate*}
	\renewcommand{\labelenumi}{\roman{enumi}.}
	\item \emph{Indeterminate permit} (\ip) -- an indeterminate value arising from a policy which could have been evaluated to permit but not deny;
	\item \emph{Indeterminate deny} (\id) -- an indeterminate value arising from a policy which could have been evaluated to deny but not permit;
	\item \emph{Indeterminate deny permit} (\idp) -- an indeterminate value arising from a policy which could have been evaluated as both deny and permit. 
\end{enumerate*}

Second, Ahn \emph{et al.} produce a monolithic logic program that can be used for the analysis of XACML policies while we take a more modular approach by first modelling an XACML PDP as a logic program and then using this encoding within a larger program for property analysis. While Ahn, \textit{et al.} only emphasize the indeterminate value in the combining algorithms, \mynote{our concern is ``indeterminate'' value in all aspect of XACML components, i.e., in \Match, \AnyOf, \AllOf, \Target, \Condition, \Rule, \Policy and \PS components. Hence, we show that our main concern is to simulate the PDP as in XACML model.} 

Finally, Ahn \emph{et al.} translate the XACML specification directly into logic programming, so the ambiguities in the natural language specification of XACML are also reflected in their encodings. To avoid this, we base our encodings on our formalisation of XACML from \cite{Ramli2011}.

\vspace{-10pt}
\section{Conclusion and Future Work}
\vspace{-5pt}
We have modelled the XACML Policy Decision Point in a declarative way using the ASP technique by transforming XACML 3.0 elements into logic programs.   Our transformation of XACML 3.0 elements is directly  based on XACML 3.0 semantics \cite{XACML3.0} and we have shown that the answer set of each program transformation is unique and that it agrees with the semantics of XACML 3.0. Moreover, we can help policy developers analyse their access control policies such as checking policies' completeness and verifying policy properties by inspecting the answer set of $\PXACML \cup \Pgenerator \cup \Prog_{\!\mathit{configuration}}$ -- the program obtained by transforming XACML 3.0 elements into logic programs joined with a query generator program and a configuration program. 


For future work, we can extend our work to handle role-based access control in XACML 3.0 \cite{xacml3.0rbac} and to handle delegation in XACML 3.0 \cite{xacml3.0delegation}. Also, we can extend our work for checking reachability of policies. A policy is reachable if we can find a request such that this policy is applicable. Thus, by removing unreachable policies we will not change the behaviour of the whole set of policies. 

\bibliographystyle{plain}
\bibliography{bibliography}

\newcommand{\niceiff}[3]{
\begin{align*}
& \semantics{#1}(\mc{Q}) = #2
&& \text{if and only if}
&& \val(#1, #2) \in M
\enspace#3
\end{align*}
}

\newcommand{\niceiffcomb}[2]{
\begin{align*}
& \bigoplus_{#1}(\mb{R}) = #2
&& \text{if and only if}
&& \algo(#1, \ppol, #2) \in M
\end{align*}
}
\appendix
\section{ASP Semantics}
\label{ap:asp semantics}
\subsection{Interpretations and Models}
The \emph{Herbrand Universe} \mc{U_\mc{L}} for a language \mc{L} is the set of all ground terms that can be formed from the constants and function symbols appearing in \mc{L}. The \emph{Herbrand base} $\mc{B}_{\mc{L}}$ for a language \mc{L} is the set of all ground atoms that can be formed by using predicate symbols from \mc{L} and ground terms from \mc{U_{\mc{L}}} as arguments. By $\mc{B}_{\Prog}$ we denote the Herbrand base for language underlying the program \Prog. When the context is clear, we are safe to omit  \Prog. 

An \emph{interpretation} $I$ of a program \Prog is a mapping from the Herbrand base $\mc{B}_{\Prog}$ to the set of truth values: true and false ($\Set{\top, \bot}$). All atoms belong to interpretation $I$ are mapped to $\top$. All atoms which does not occur in $I$ are mapped to $\bot$. 

The truth value of arbitrary formulae under some interpretation can be determined from a truth table as usual (see Table \ref{t:truth table}). 
\begin{table}
\centering
\caption{Truth Values for Formulae}
\label{t:truth table}
\[
\begin{array}{c|c|c|c|c}
\phi & \psi & \mynot \phi & \phi \wedge \psi& \phi \la \psi  \\
\hline
\top & \top & \bot & \top & \top \\
\top & \bot & \bot & \bot & \top \\
\bot & \top & \top & \bot & \bot \\
\bot & \bot & \top & \bot & \top
\end{array}
\]
\end{table}

The logical value of ground formulae can be derived from Table \ref{t:truth table} in the usual way. A formula $\phi$ is then \emph{true under interpretation $I$}, denoted by $I(\phi) = \top$, if all its ground instances are true in $I$; it is \emph{false under interpretation $I$}, denoted by $I(\phi) = \bot$, if there is a ground instance of $\phi$ that is false in $I$. 

Let $I$ be an interpretation. $I$  \emph{satisfies} formula $\phi$ if $I(\phi) = \top$. For a program \Prog, we say $I$ \emph{satisfies} of \Prog if $I$ satisfies for every rule in \Prog. An interpretation $I$ is a \emph{model} of formula $\phi$ if $I$ satisfies $\phi$. 

Let \mc{I} be a collection of interpretations. Then an interpretation $I$ is \mc{I} is called \emph{minimal} in \mc{I} if and only if there is no interpretation $J$ in \mc{I} such that $J \subsetneq I$. An interpretation $I$ is called \emph{least} in \mc{I} if and only if $I \subseteq J$ for any interpretation $J$ in \mc{I}. A model $M$ of a program \Prog is called minimal (respectively least) if it is minimal (respectively least) among all models of \Prog.

\subsection{Answer Set}
An interpretation $I$ of \ground{\Prog} is an answer set for  \Prog if $I$ is minimal (w.r.t. set inclusion) among the  interpretations satisfying the rules of 
\[\begin{array}{ll}
\Prog^I = \{ A \la B_1, \ldots, B_m | & A \la B_1, \ldots, B_m, \mynot B_{m+1}, \ldots, \mynot B_{n} \in \Prog \mbox{ and} \\
 & I(\mynot B_{m+1}, \ldots, \mynot B_{n}) = \top\} 
\end{array}
\]

\section{Proofs}

\begin{lemma}
\label{l:1}
Let $M$ be an answer set of program \Prog and let $H \la \Body$ be  a rule in \Prog.
Then, $H \in M$ if $M(\Body) = \top$.
\end{lemma}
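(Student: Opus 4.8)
The plan is to unfold the definition of answer set recalled in Section~\ref{s:xacml-asp}. Recall that $M$ being an answer set of $\Prog$ means that $M$ is a minimal model (with respect to set inclusion) of the reduct $\Prog^M$; in particular $M$ is a model of $\Prog^M$, hence $M$ satisfies every rule of $\Prog^M$. It therefore suffices to show that the hypothesis $M(\Body) = \top$ forces a rule with head $H$ into $\Prog^M$ whose body is true in $M$, so that satisfaction of that rule yields $H \in M$.

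First I would reduce to the ground level: since $M$ is an interpretation of $\ground{\Prog}$ and, by the truth-table conventions of Appendix~\ref{ap:asp semantics}, $M(\Body) = \top$ holds precisely when every ground instance of $\Body$ is true in $M$, it is enough to argue for an arbitrary ground instance $H \la B_1, \ldots, B_m, \mynot B_{m+1}, \ldots, \mynot B_n$ of the rule. From $M(\Body) = \top$ and the truth table for conjunction one reads off that $B_i \in M$ for all $1 \leq i \leq m$ and $B_j \notin M$ for all $m+1 \leq j \leq n$; the latter is exactly the side condition $M(\mynot B_{m+1}, \ldots, \mynot B_n) = \top$ appearing in the definition of the reduct.

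Hence, by the definition of $\Prog^M$, the rule $H \la B_1, \ldots, B_m$ belongs to $\Prog^M$. Since $M$ is a model of $\Prog^M$ it satisfies this rule, and since its body $B_1, \ldots, B_m$ is true in $M$, the truth table for $\la$ leaves only the possibility $M(H) = \top$, i.e.\ $H \in M$. As a byproduct this shows $H \neq \bot$ under the hypothesis, since $M(\bot) = \bot$ always; equivalently, the body of a constraint is never true in an answer set.

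There is essentially no deep obstacle: the statement is a direct consequence of the implication ``answer set $\Rightarrow$ model of the reduct''. The only points needing a little care are the passage from the schematic rule $H \la \Body$ to its ground instances, so that the reduct construction (which is defined on $\ground{\Prog}$) applies, and the bookkeeping that $M(\Body) = \top$ decomposes into the positive memberships $B_i \in M$ and the negative non-memberships $B_j \notin M$ in precisely the form the reduct expects.
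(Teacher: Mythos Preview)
Your proof is correct and follows essentially the same route as the paper's own argument: decompose $\Body$ into its positive and negative parts, use $M(\Body)=\top$ to conclude that the negative part is satisfied so that the reduct $\Prog^M$ contains the rule $H \la B_1,\ldots,B_m$, and then invoke that $M$ is a model of $\Prog^M$ to obtain $H\in M$. Your explicit passage to ground instances and the side remark about constraints are welcome clarifications but do not change the underlying structure of the proof.
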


\begin{proof}
Let $\Body = B_1, \ldots, B_m, \mynot B_{m+1}, \ldots, \mynot B_n$. To show the lemma holds, suppose $M(\Body) = \top$. Then we find that $\Set{B_1, \dotsc, B_m} \subseteq M$ and $M \cap \Set{B_{m+1}, \ldots, B_n} = \emptyset$. Since $M$ is a minimal model of $\Prog^M$  then we find that $H \la B_1, \ldots, B_n$ is in $\Prog^M$. Since $\Set{B_1, \ldots, B_m} \subseteq M$ and $M$ is a model then $M(H) = \top$. Thus $H \in M$.\qed



\end{proof}

The Lemma \ref{l:1} only ensures that if the body of a rule is true under an answer set $M$ then the head is also in $M$. However, in general, if the head of a rule is in a answer set $M$ then there is no guarantee that the body  is always true under $M$. For example, suppose we have a program $\Set{p \la \top., p \la q.}$. In this example the only answer set is $M = \Set{p}$. We can see that $p$ is in $M$. However, $q$ is not in $M$, thus, $M(q)$ is false.

\begin{lemma}
\label{l:2}
Let $M$ be an answer set of program \Prog  and let $H$ be in  $M$. Then, there is a rule in \Prog where $H$ as the head. 
\end{lemma}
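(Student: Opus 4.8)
The plan is to argue by contradiction, exploiting the defining feature of an answer set: that $M$ is a \emph{minimal} model of its reduct $\Prog^M$. So I would suppose $M$ is an answer set of \Prog, that $H \in M$, but that no rule of \Prog has $H$ as its head. The first observation is that forming the reduct only deletes negative literals from rule bodies and discards whole rules; it never changes a head. Hence no rule of $\Prog^M$ has $H$ as its head either.

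Next I would exhibit a strictly smaller model of $\Prog^M$, namely $M' = M \setminus \Set{H}$, contradicting minimality. To see $M'$ satisfies $\Prog^M$, take an arbitrary rule $A \la B_1, \ldots, B_m$ of $\Prog^M$ and suppose $M'(B_1 \wedge \dotsb \wedge B_m) = \top$; then $\Set{B_1, \ldots, B_m} \subseteq M' \subseteq M$, so $M$ satisfies the body, and since $M$ is a model of $\Prog^M$ we get $A \in M$. By the contradiction hypothesis $A \neq H$, so $A \in M \setminus \Set{H} = M'$, i.e.\ $M'$ satisfies the rule. (Facts $A \la \top$ are covered by the same reasoning since $\top$ holds in every interpretation.) Thus $M'$ is a model of $\Prog^M$ with $M' \subsetneq M$, which is impossible. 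Therefore some rule of \Prog — more precisely, some rule of \ground{\Prog}, which is an instance of a rule of \Prog — has $H$ as its head.

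I do not expect a real obstacle here; the statement is essentially the standard "supportedness" property of answer sets, and the proof is two short paragraphs once the reduct is unwound. The only points needing care are bookkeeping: being explicit that the reduct preserves heads, noting that the relevant monotonicity direction is that a purely positive body true in the subset $M'$ is also true in the superset $M$ (so that $M$ being a model forces $A \in M$), and the mild identification of a rule of \Prog with its ground instances, since by definition $M$ is an interpretation of \ground{\Prog}. The remark already made after Lemma~\ref{l:1} — that the converse of Lemma~\ref{l:1} can fail — also signals why this lemma must be stated separately and why only the existence of \emph{some} rule with head $H$, not the truth of its body, can be concluded.
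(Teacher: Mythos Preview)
Your proposal is correct and follows essentially the same argument as the paper: assume for contradiction that no rule has $H$ as head, observe that the reduct then has no such rule either, and show $M' = M \setminus \Set{H}$ is a model of $\Prog^M$, contradicting minimality. Your version is in fact more detailed than the paper's, which simply asserts that $M'$ is a model of $\Prog^M$ without spelling out the verification you give.
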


\begin{proof}
Suppose that $M$ is an answer set of program \Prog. Then we find that $M$ is a minimal model of $\Prog^M$. Suppose $H \in M$ and there is no rule in $\Prog^M$ such that $H$ as the head. Then, we find that $M' = M / \Set{H}$ and $M'$ is a model of $\Prog^M$. Since  $M$ is a minimal model of $\Prog^M$ but we have $M' \subset M$. Therefore we find a contradiction. Thus, there should be a rule in $\Prog^M$ such that $H$ as the head. Hence, there is a rule in $\Prog$ such that $H$ as the head. \qed
\end{proof}

\begin{lemma}
\label{l:3}
Let $M$ be an answer set of program \Prog and let $H$ be in $M$. Then, there exists a rule where $H$ as the head and the body is true under $M$. 
\end{lemma}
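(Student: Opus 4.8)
The plan is to argue by contradiction, exploiting the fact that an answer set $M$ is a \emph{minimal} model of the reduct $\Prog^M$. Assume $H \in M$ but that every rule of $\Prog$ having $H$ as its head has a body that is false under $M$. I will then show that $M' = M \setminus \Set{H}$ is again a model of $\Prog^M$; since $H \in M$ we have $M' \subsetneq M$, contradicting the minimality of $M$ among the models of $\Prog^M$. (Lemma~\ref{l:2} already tells us at least one rule with head $H$ exists, but the contradiction argument works regardless.)

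First I would fix an arbitrary rule $r' = (A \la B_1, \ldots, B_m)$ in $\ground{\Prog}^M$ and recall that it originates from a rule $r = (A \la B_1, \ldots, B_m, \mynot B_{m+1}, \ldots, \mynot B_n)$ in $\ground{\Prog}$ with $M \cap \Set{B_{m+1}, \ldots, B_n} = \emptyset$. To check that $M'$ satisfies $r'$, the only nontrivial case is $\Set{B_1, \ldots, B_m} \subseteq M'$. In that case $\Set{B_1, \ldots, B_m} \subseteq M$ as well, and combined with $M \cap \Set{B_{m+1}, \ldots, B_n} = \emptyset$ this means the whole body of the original rule $r$ is true under $M$.

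Next I split on the head $A$. If $A = H$, then $r$ is a rule with head $H$ whose body is true under $M$, contradicting the standing assumption, so this subcase cannot arise. If $A \neq H$ (including the constraint case $A = \bot$), I use that $M$ itself satisfies $r'$: since the body of $r'$ holds in $M$, either $A = \bot$ is impossible (again the subcase is vacuous) or $A$ is an atom with $A \in M$, and $A \neq H$ gives $A \in M'$. Hence $M'$ satisfies every rule of $\Prog^M$, which is the contradiction we wanted, so some rule with head $H$ must have a body true under $M$.

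The step needing the most care is the bookkeeping around the reduct: one must be sure that deleting $H$ from $M$ does not break any reduct rule whose positive body mentions $H$, or whose head is $\bot$. The clean observation that handles this is that removing an atom can only \emph{falsify} bodies, never make a previously false body true; so the only reduct rules that could possibly become violated when passing from $M$ to $M'$ are exactly those with head $H$ — and those are precisely the ones ruled out by the contradiction hypothesis.
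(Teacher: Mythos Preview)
Your proof is correct and follows essentially the same approach as the paper: assume for contradiction that no rule with head $H$ has its body true under $M$, show that $M' = M \setminus \Set{H}$ is still a model of $\Prog^M$, and derive a contradiction with the minimality of $M$. Your version is in fact more careful than the paper's, which glosses over the case analysis on the head of an arbitrary reduct rule and implicitly treats the case of a single rule with head $H$.
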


\begin{proof}
Suppose that $M$ is an answer set of program \Prog. Since $H$ is in $M$ thus, by Lemma \ref{l:2}, we find that there is a rule in $\Prog$ in a form $H \la \Body$. Suppose that $M(\Body) \neq \top$. Therefore, $H \la \Body$ is not in $\Prog^M$. Moreover, we  can find another interpretation $M'$ such that $M / \Set{H}$ and $M'$ is also a model of $\Prog^M$. However, we know that $M$ is a minimal model for $\Prog^M$ but we have $M' \subset M$. Thus, there is a contradiction. \qed
\end{proof}


\newcommand{\ppol}{{\ensuremath{\mc{P}}}}
\newcommand{\pmatch}{\ensuremath{\Prog^\mc{M}}}
\newcommand{\pallof}{\ensuremath{\Prog^\mc{A}}}
\newcommand{\panyof}{\ensuremath{\Prog^\mc{E}}}
\newcommand{\ptarget}{\ensuremath{\Prog^\mc{T}}}
\newcommand{\pcond}{\ensuremath{\Prog^\mc{C}}}
\newcommand{\prule}{\ensuremath{\Prog^\mc{R}}}
\newcommand{\ppolicy}{\ensuremath{\Prog^\mc{\ppol}}}
\newcommand{\pcomb}{\ensuremath{\Prog^\mc{\comb}}}
\newcommand{\pprog}{\ensuremath{\Prog}}

We define some notation:
\[\footnotesize
\begin{array}{|l|l|l|}
\hline
\textbf{XACML Components} & \textbf{XACML Symbols} & \textbf{LP Symbols} \\
\hline
\hbox{\Match} & \mcM & \pmatch = \Prog_{\mcM} \\
\hbox{\AllOf} & \mcA = \bigwedge \mcM_i & \pallof = \bigcup \Prog^{\mcM_i} \cup \Prog_\mcA \\
\hbox{\AnyOf} & \mcE = \bigvee \mcA_i & \panyof =  \bigcup \Prog^{\mcA_i} \cup \Prog_\mcE\\
\hbox{\Target} & \mcT = \bigwedge \mcE_i & \ptarget = \bigcup \Prog^{\mcE_i} \cup \Prog_{\mcT}\\
\hbox{\Condition} & \mcC & \pcond = \Prog_{\mcC}\\
\hbox{\Rule} & \mcR = [E, \mcT, \mcC] & \prule = \Prog^\mcT \cup \Prog^\mcC \cup \Prog_\mcR \\
\hbox{\Policy} & \mcP = [\mcT, \seq{\mcR_1, \dotsc, \mcR_n}, \comb] & \ppolicy = \bigcup \Prog^{\mc{R}_i} \cup \Prog^{\mc{T}}  \cup \Prog^{\comb} \cup \Prog_{\mc{\ppol}} \\
\hbox{\PS} & \mcPS = [\mcT, \seq{\mcP_1, \dotsc, \mcP_n}, \comb] & \ppolicy = \bigcup \Prog^{\mcP_i} \cup \Prog^{\mc{T}}  \cup \Prog^{\comb} \cup \Prog_{\mcPS} \\
\hbox{Combining Algorithm} & \textrm{\comb\ is either \po\ or \fa or \ooa} & \pcomb = \bigcup \Prog_{\mcR_i} \cup \Prog_{\mcP_j} \Prog_{\comb}\\
\hline  
\end{array}
\]

\Input{proof-match.tex}
\Input{proof-allof.tex}
\Input{proof-anyof.tex}
\Input{proof-target.tex}
\Input{proof-condition.tex}

\Input{proof-rule.tex}
\Input{proof-po.tex}
\Input{proof-fa.tex}

\Input{proof-ooa.tex}
\vspace{5pt}\noindent\textbf{Evaluation to Combining Algorithms.}
\begin{proposition}
\label{p:comb}
Let $\pprog = \pprog_{\mc{Q}} \cup \pprog_{\comb} \cup \ppolicy$  be a program obtained by merging \Request transformation program $\pprog_{\mc{Q}}$,  combining algorithm transformation program $\pprog_{\comb}$ and \Policy\ \ppol\ transformation program with its components \ppolicy. Let $M$ be an answer set of \pprog.
Then,  \niceiffcomb{\comb}{V}
where $\mb{R} = \seq{\mc{R}_1(\mc{Q}), \ldots, \mc{R}_n(\mc{Q})}$ be a sequence of policy value where each $\mc{R}_i$ is a \Rule in the sequence inside  \Policy\ \ppol.
\end{proposition}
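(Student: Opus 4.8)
The plan is to prove the biconditional by a case split on the combining-algorithm identifier $\comb \in \Set{\po, \fa, \ooa}$, matching each algorithm's transformation clauses against its defining equation (\eqref{eq:po}, \eqref{eq:fa}, \eqref{eq:ooa} respectively). Every case runs on the same two generic facts about answer sets: Lemma~\ref{l:1} (a clause of $\pprog$ whose body is true under $M$ forces its head into $M$), used to pass from the semantic side-condition to membership in $M$, and Lemma~\ref{l:3} (every atom of $M$ heads some clause with body true under $M$), used for the reverse. Two preliminaries come first. Since $\pprog$ extends the \Rule subprograms conservatively (by acyclicity, the $\val(\mc{R}_i,\cdot)$ atoms depend only on $\pprog_{\mc{Q}}\cup\bigcup_i\Prog^{\mc{R}_i}$), the already-established correctness of \Rule transformation gives $\val(\mc{R}_i,V)\in M$ iff $\semantics{\mc{R}_i}(\mc{Q})=V$, i.e. iff $s_i=V$ where $\mb{R}=\seq{s_1,\dots,s_n}$; and from the lone clause $\dec(\ppol,\mc{R}_i,V)\la\val(\mc{R}_i,V)$ together with Lemmas~\ref{l:1} and~\ref{l:3}, $\dec(\ppol,\mc{R}_i,V)\in M$ iff $\val(\mc{R}_i,V)\in M$. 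Thus the $\dec$-atoms in $M$ are a faithful record of the argument sequence $\mb{R}$ fed to $\bigoplus_{\comb}$.

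For permit-overrides I would read the six clauses of \eqref{eq:po} in priority order $\p\succ\idp\succ\ip\succ\d\succ\id\succ\na$, which is precisely the dependency stratification of the atoms $\algo(\po,\ppol,\cdot)$ in $\pprog_{\po}$: each such atom's clauses mention, negated, only strictly higher-priority $\algo(\po,\ppol,\cdot)$ atoms (plus $\dec$-atoms). I then prove, by induction on this priority level, that $\algo(\po,\ppol,V)\in M$ iff the $V$-case of \eqref{eq:po} is the first whose side-condition holds for $\mb{R}$. In the inductive step the positive part of the $V$-case side-condition is a combination of statements ``$\exists i\colon s_i=v$'', each of which, grounding the witness, is exactly a body literal $\dec(\ppol,R,v)$ true under $M$ by the preliminaries; and ``no higher-priority case applies'' is, by the induction hypothesis, exactly the conjunction of the $\mynot\algo(\po,\ppol,w)$ guards of the matching program clause being true under $M$ (for instance, $\mynot\algo(\po,\ppol,\p)$ and $\mynot\algo(\po,\ppol,\idp)$ together with $\exists i\colon s_i=\ip$ yield ``$\forall j\colon s_j\neq\ip\Rightarrow s_j=\na$''). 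So by Lemma~\ref{l:1} the head is in $M$; conversely, if $\algo(\po,\ppol,V)\in M$ then Lemma~\ref{l:3} supplies a clause with body true under $M$, and reading the body back through the preliminaries and the induction hypothesis gives exactly the condition defining $\bigoplus_{po}(\mb{R})=V$. First-applicable is the same scheme but purely positive: its clauses use $\neq$ tests and an explicit $\na$-prefix instead of negation-as-failure, so one checks that at most one clause body is true under $M$ and it is the one whose prefix $\mc{R}_1,\dots,\mc{R}_{i-1}$ are all $\na$ with $\mc{R}_i$ the first non-$\na$ (falling back to the last clause if there is none). Only-one-applicable is again the same pattern on \eqref{eq:ooa}, with extra care that the ``$\exists i\colon s_i\neq\na$ and $\forall j\neq i\colon s_j=\na$'' case and the two-witness disjunctions ``$\exists j\neq k\colon s_j=s_k=\d$'' and the like correspond to clauses carrying the $R_1\neq R_2$ side-conditions and the $\mynot\dec(\ppol,R,\cdot)$ literals.

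The main obstacle is the bookkeeping around negation as failure: the semantic equations encode a priority cascade through their implicit ``otherwise'' ordering, whereas $\pprog_{\po}$ and $\pprog_{\ooa}$ make that cascade explicit via $\mynot\algo(\comb,\ppol,\cdot)$ guards, so the crux is the induction showing that ``$\algo(\comb,\ppol,w)\notin M$ for all higher-priority $w$'' is equivalent to ``no higher-priority case of the semantic equation applies to $\mb{R}$''. This uses the stratification of $\pprog$ and the fact that each combining operator is a total function, so exactly one case is the first applicable one. Once that equivalence is in hand each clause-to-clause match is mechanical, but I would still treat $\ooa$ last and in full detail, since its clause list is the longest and its $R_1\neq R_2$ and ``exactly one applicable'' side-conditions are the easiest to get subtly wrong.
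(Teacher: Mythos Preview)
Your approach is essentially the same as the paper's: a case split on $\comb\in\Set{\po,\fa,\ooa}$, with each case handled separately. The paper's proof of this proposition is a one-liner that simply cites three preceding propositions (one per combining algorithm), so what you have written is effectively a sketch of the content of those three propositions together with the glue (the preliminaries about $\dec$-atoms mirroring the $\val(\mc{R}_i,\cdot)$-atoms). Your use of Lemmas~\ref{l:1} and~\ref{l:3} and the stratified induction on the priority cascade is exactly the mechanism one needs, and your identification of the $\ooa$ case as the one requiring the most bookkeeping is apt.
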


\begin{proof}
It follows from Prop. \ref{p:po}, Prop. \ref{p:fa} and Prop. \ref{p:ooa}.
\qed
\end{proof}

\Input{proof-policy.tex}
\vspace{5pt}\noindent\textbf{Evaluation to XACML Component.}
\begin{corollary}
\label{p:xacml}
Let $\pprog = \pprog_{\mc{Q}} \cup \PXACML$  be a program obtained by merging \Request transformation program $\pprog_{\mc{Q}}$ and all XACML components transformation programs \PXACML. Let $M$ be an answer set of \pprog.
Then, \niceiff{X}{V}{}
where $X$ is an XACML component. 
\end{corollary}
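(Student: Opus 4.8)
The plan is to obtain this corollary as the capstone of the component-wise results established in this appendix, by structural induction on the XACML component $X$. Recall that, following the table of notation above, $\PXACML$ is by construction the union of the transformation programs $\Prog^{X'}$ ranging over every sub-component $X'$ syntactically occurring in the policy; each $\Prog^{X'}$ already bundles the transformation programs of its own sub-components and, where relevant, the combining-algorithm program $\Prog^{\comb}$. Hence for any fixed $X$ the sub-program $\pprog_{\mcQ}\cup\Prog^{X}$ is contained in $\pprog = \pprog_{\mcQ}\cup\PXACML$, and it is precisely the program to which the proposition for $X$'s syntactic category (\Match, \AllOf, \AnyOf, \Target, \Condition, \Rule, or \Policy/\PS) applies.

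The one genuinely new ingredient is a \emph{restriction} (modularity) step: I must show that the unique answer set $M$ of $\pprog$, restricted to the Herbrand base of $\pprog_{\mcQ}\cup\Prog^{X}$, is exactly the unique answer set of $\pprog_{\mcQ}\cup\Prog^{X}$. This rests on acyclicity. The main text already observes that every transformation program is acyclic, hence so is $\pprog$, so $\pprog$ has a unique answer set $M$, and likewise each $\pprog_{\mcQ}\cup\Prog^{X}$ has a unique answer set. Moreover, for each $X$ no atom in the language of $\pprog_{\mcQ}\cup\Prog^{X}$ is the head of a rule of $\pprog$ lying outside that sub-program: the dependency relation among the $\val(\cdot,\cdot)$, $\dec(\cdot,\cdot,\cdot)$ and $\algo(\cdot,\cdot,\cdot)$ atoms strictly follows the containment hierarchy downwards. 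Thus $\pprog_{\mcQ}\cup\Prog^{X}$ is a ``bottom'' piece of $\pprog$ in the sense of the splitting-set theorem; equivalently, one checks directly that $M\cap\mc{B}_{\pprog_{\mcQ}\cup\Prog^{X}}$ satisfies the reduct $(\pprog_{\mcQ}\cup\Prog^{X})^{M}$ and is minimal among such interpretations, and then appeals to the uniqueness just noted. I expect this to be the main obstacle, since the component propositions above are each stated only for the ``local'' program, and the proof must cleanly justify transporting them to the global answer set $M$ of $\pprog$.

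With that in hand the induction is routine bookkeeping over the syntactic cases. Base case: $X$ a \Match element --- immediate from $\pprog_{\mcQ}$ together with the \Match proposition. Inductive steps: an \AllOf $\bigwedge\mcM_i$ has its value fixed by the $\val(\mcM_i,\cdot)$ via the \AllOf proposition; likewise \AnyOf from \AllOf, \Target from \AnyOf, \Condition via \eval, \Rule from its \Target and \Condition, \Policy from its \Rule\ sequence and \Target\ together with Proposition~\ref{p:comb} for the combining operator, and finally \PS from its sequence of \Policy/\PS children in exactly the same shape as \Policy. In each step the induction hypothesis supplies $\semantics{X'}(\mcQ)=V'$ iff $\val(X',V')\in M$ for every proper sub-component $X'$, the restriction step licenses applying the relevant proposition with $M$ itself, and the biconditional in that proposition delivers the claim for $X$; uniqueness of the $V$ with $\val(X,V)\in M$ is part of each proposition's statement and matches the fact that $\semantics{X}$ is a function.
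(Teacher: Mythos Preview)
Your proposal is correct but considerably more elaborate than the paper's own argument. The paper's proof is a single sentence: ``It follows from Prop.~\ref{p:match}, Prop.~\ref{p:allof}, Prop.~\ref{p:anyof}, Prop.~\ref{p:target}, Prop.~\ref{p:condition}, Prop.~\ref{p:rule} and Prop.~\ref{p:policy}.'' That is, the paper treats the corollary as an immediate consequence of the component-level propositions and does not explicitly perform the structural induction you outline, nor does it isolate or discharge the restriction/modularity step you identify as the ``main obstacle.''

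Your instinct that something must be said here is sound: the individual propositions (as visible from Prop.~\ref{p:comb}) are indeed phrased for the \emph{local} program $\pprog_{\mcQ}\cup\Prog^{X}$, so passing to the global answer set $M$ of $\pprog_{\mcQ}\cup\PXACML$ does require an argument that $M$ restricted to the local Herbrand base coincides with the local answer set. Your appeal to acyclicity and the downward-only dependency structure (a splitting-set style observation) is exactly the right justification, and it is one the paper leaves implicit. What your approach buys is an honest, reusable modularity lemma; what the paper's approach buys is brevity, at the cost of leaving that gap for the reader to fill. Substantively both arrive at the same place via the same chain of component propositions, so there is no genuine divergence in strategy---you are supplying the glue the paper omits.
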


\begin{proof}
It follows from Prop. \ref{p:match}, Prop. \ref{p:allof}, Prop. \ref{p:anyof}, Prop. \ref{p:target}, Prop. \ref{p:condition}, Prop. \ref{p:rule} and Prop. \ref{p:policy}.
\qed
\end{proof}

\end{document}